\documentclass[pra,aps,superscriptaddress,twocolumn,nofootinbib,longbibliography,a4paper]{revtex4-2}

\usepackage{amsfonts,amsmath,amssymb,amsthm,bbm,bbold,bm,color,epsfig,eucal,float,graphicx,hyperref,latexsym,lipsum,mathtools,mathrsfs,physics,tcolorbox,times}

\usepackage[capitalise, noabbrev]{cleveref}
\usepackage[margin=2cm]{geometry}
\usepackage[utf8]{inputenc}
\usepackage[normalem]{ulem}

\hypersetup{colorlinks=true,linkcolor=blue,citecolor=blue,urlcolor=blue}


\def\E{ {\cal E} }

\def\>{\rangle}
\def\<{\langle}

\renewcommand{\v}[1]{\ensuremath{\boldsymbol #1}}
\renewcommand{\tr}[1]{\mathrm{Tr}\left( #1 \right)}


\newcommand{\ot}{\otimes}
\newcommand{\new}[1]{{\color{black}#1}}


\newtheorem{lemma}{Lemma}
\newtheorem{theorem}[lemma]{Theorem}
\newtheorem{corollary}[lemma]{Corollary}

\theoremstyle{definition}

\begin{document}


\title{Finite-size Catalysis in Quantum Resource Theories}

\author{Patryk Lipka-Bartosik}
\affiliation{Department of Applied Physics, University of Geneva, 1211 Geneva, Switzerland}
\affiliation{Faculty of Physics, Astronomy and Applied Computer Science, Jagiellonian University, 30-348 Krak\'{o}w, Poland}

\author{Kamil Korzekwa}
\affiliation{Faculty of Physics, Astronomy and Applied Computer Science, Jagiellonian University, 30-348 Krak\'{o}w, Poland}

\date{\today}

\begin{abstract}
    Quantum catalysis, the ability to enable previously impossible transformations by using auxiliary systems without degrading them, has emerged as a powerful tool in various resource theories. Although catalytically enabled state transformations have been formally characterized by the monotonic behaviour of entropic quantifiers (e.g., the von Neumann entropy or non-equilibrium free energy), such characterizations often rely on unphysical assumptions, namely the ability of using catalysts of infinitely large dimension. This approach offers very limited insights into the practical significance of using catalysis for quantum information processing. Here, we address this problem across a broad class of quantum resource theories. Leveraging quantum information tools beyond the asymptotic regime, we establish \emph{sufficient} conditions for the existence of catalytic transformations with finite-size catalysts. We further unveil connections between finite-size catalysis and multi-copy transformations. Notably, we discover a phenomenon of \emph{catalytic resonance}: by carefully tailoring the catalysts's state, one can drastically reduce the required dimension of the catalyst, thus enabling efficient catalytic transformations with minimal resources. Finally, we illustrate our findings with examples from the resource theories of entanglement and thermodynamics, as well in the context of catalytic unitary transformations.
\end{abstract}

\maketitle


\section{Introduction}

Quantum resources, such as entanglement~\cite{Horodecki_2009} or coherence~\cite{Baumgratz2014}, have revolutionized many areas of research, ranging from metrology~\cite{Escher_2011} and cryptography~\cite{PhysRevLett.67.661}, to computing~\cite{PhysRevLett.86.5188} and thermodynamics~\cite{binder2019thermodynamics}. Efficient manipulation of these resources is thus essential for obtaining quantum advantage with the near-term quantum devices. An intriguing technique for optimizing such manipulations is quantum catalysis, an approach demonstrating that the very presence of resources, rather than their expenditure, enables new ways of utilizing quantum resources~\cite{review2,lipka2023catalysis}.

First examples of quantum catalysis focused on entanglement manipulation~\cite{Jonathan_1999,daftuar2001mathematical,vanDam2003,sun2005existence,feng2005catalyst,turgut2007catalytic}, and then spread to quantum thermodynamics~\cite{Brand_o_2015,Ng_2015,Wilming2017,Mueller2018,Lipka-Bartosik2021,shiraishi2021quantum,Gallego2016,boes2019bypassing,Henao_2021,Henao2022,son2022catalysis,son2023hierarchy,Lipka_Bartosik_2023,czartowski2023thermal,czartowski2024catalyse}, coherence theory~\cite{Aberg2014,Vaccaro2018,lostaglio2019coherence,takagi2022correlation,char2023catalytic,van2023covariant} and other areas of quantum theory~\cite{Marvian2019,wilming2021entropy,Wilming2022correlationsin,rubboli2022fundamental,lie2021catalytic,boes2019neumann}. Quantum catalysis employs ancillary quantum systems, known as catalysts, prepared in special states that do not change when used in resource transformations. Such catalysts can be then re-used, providing an interesting technique for improving the performance of quantum protocols without incurring additional expenses in terms of resources. Indeed, it has been shown that catalysis provides advantages in quantum teleportation~\cite{PhysRevLett.127.080502}, quantum state merging~\cite{Kondra_2021}, cooling \cite{Henao_2021}, thermodynamic work extraction ~\cite{PhysRevLett.127.080502,Wilming2022correlationsin,shiraishi2021quantum} or in the operation of heat engines  \cite{biswas2024catalytic,lobejko2024catalytic}. 

One of the main difficulties in using catalysis for practical tasks is the fact that finding an appropriate state of the catalyst is extremely hard. Only very recently a progress in this direction was made in several resource theories. More specifically, it was shown that, under the constraints of a given resource theory, a state $\rho$ can be transformed into another resourceful state $\sigma$ with the help of some catalyst if and only if~\cite{shiraishi_quantum_2021,wilming2021entropy,Kondra_2021,PhysRevLett.127.080502,char2023catalytic,takagi2022correlation}
\begin{equation} \label{eq:entropic_intro}
    D(\rho) - D(\sigma) > 0,
\end{equation}
where $D(\cdot)$ is an entropic quantifier specific to a given resource theory. However, when looking carefully at the proofs of the above relationships, it becomes clear that the practical significance of these results is limited. This is because the dimension of the catalyst required when transforming $\rho$ into $\sigma$ is generally unbounded. Therefore  Eq.~\eqref{eq:entropic_intro}, while it specifies the fundamental limits of catalysis, it does not provide any useful information on how to practically accomplish the desired transformation. This naturally poses a serious difficulty when applying catalysis to realistic scenarios, let alone designing experiments.  

In this work, we devise a general method for determining the sufficient dimension of the catalyst needed to accomplish a given transformation. 
Our first main result provides an operational characterization of the second-order asymptotic transformation rates in the single-copy regime. That is, we prove a formal connection between second-order asymptotic transformation rates and single-copy catalytic transformations with finite-size catalysts. Exploiting this relationship leads to our second main result: A \emph{sufficient} condition for the existence of a finite-dimensional catalyst enabling a transformation between two states $\rho$ and $\sigma$ in a given resource theory. More specifically, given a resource theory whose second-order asymptotic transformation rates are known, we show that if  
\begin{align}
   D(\rho) - D(\sigma)  > \frac{f(\epsilon)}{\sqrt{\log d}} + o\left(\frac{1}{\sqrt{\log d}}\right)
\end{align}
holds for a certain (known) real function $f(\epsilon)$, then there exists a catalytic state transformation with a catalyst of dimension $d$ that enables the transformation $\rho \rightarrow \sigma$ approximately, i.e. with an error $\epsilon > 0$. Moreover, we also provide the explicit construction for the state of the catalyst.

The paper is structured as follows. First, in Sec.~\ref{sec:framework}, we present the resource-theoretic framework, focusing particularly on asymptotic and catalytic state transformations. Then, in Sec.~\ref{sec:transformations}, we present our main result by formally relating these two types of transformations. In Sec.~\ref{sec:thermo} we discuss the application of this result to resource theories of incoherent thermodynamics and pure state entanglement, whereas in Sec.~\ref{sec:mechanics} we apply it to resource theory constrained only to unitary transformations. Finally, in Sec.~\ref{sec:outlook}, we provide an outlook for future research.


\section{Framework}
\label{sec:framework}

In this paper we investigate transformations of finite-dimensional quantum systems, which we denote using capital letters:~$S$ for the principal system of interest and~$C$ for the catalyst. A state of a $d$-dimensional quantum system is given by a density matrix~$\rho$ of size~$d$ satisfying~$\rho\geq 0$ and~$\tr \rho=1$. To distinguish between the principal system and the catalyst, we will use appropriate subscripts, e.g., the dimension of the catalyst will be denoted by~$d_C$ and the state of the system by~$\rho_S$. To measure distance between two density matrices, $\rho$ and $\sigma$, we will use the trace distance,
\begin{equation}
    \delta(\rho,\sigma):=\frac{1}{2}\|\rho-\sigma\|_1=\frac{1}{2}\tr{|\rho-\sigma|}.
\end{equation}
Moreover, we will make frequent use of the relative entropy defined by~\cite{umegaki62conditional}
\begin{equation}
    D(\rho\|\sigma):=\tr{\rho\left(\log\rho-\log\sigma\right)},
\end{equation}
and of the relative entropy variance given by~\cite{tomamichel2013hierarchy,li2014second}
\begin{equation}
    V(\rho\|\sigma):=\tr{\rho\left(\log\rho-\log\sigma\right)^2}-D(\rho\|\sigma)^2.    
\end{equation}
For classical $d$-dimensional probability distributions, we will also employ the notions of Shannon entropy and the related entropy variance:
\begin{subequations}
\begin{align}
    H(\v{p}):=&-\sum_i p_i\log p_i,\\
    V(\v{p}):=&\sum_i p_i(\log p_i+H(\v{p}))^2.
\end{align}
\end{subequations}

For our results to be widely applicable, we employ the framework of general quantum resource theories~\cite{chitambar_quantum_2019}. The fundamental task addressed in such theories is the problem of transforming a state~$\rho$ of a given quantum system into another state~$\sigma$, using a restricted set of free operations~$\mathcal{O}$, which form a subset of all quantum channels (i.e., completely positive and trace-preserving linear maps between density matrices). The set~$\mathcal{O}$ usually results from the constraints imposed by a particular physical setup. The three cases that we will address in this paper as illustrative examples are given by thermodynamic constraints captured by thermal operations (TO)~\cite{Janzing2000,horodecki2013fundamental}, the constraints of local operations and classical communication (LOCC) studied in the context of entanglement transformations~\cite{Horodecki_2009}, and the constraints of closed quantum dynamics, i.e. unitarity~\cite{boes2019neumann}.  One further defines the set of free states~$\mathcal{S}$ as the set of all density matrices that can be prepared using only free operations. In the three examples mentioned above, these correspond to the thermal equilibrium state, the set of separable states, and an empty set, respectively. The tuple~\mbox{$\mathcal{R} = (\mathcal{S}, \mathcal{O})$} constitutes a resource theory. We will further denote $\rho \xrightarrow{\mathcal{O}} \sigma$ when a density matrix $\rho$ can be transformed into another density matrix~$\sigma$ using free operations.

Given a resource theory~$\mathcal{R}$ and two density operators, $\rho$~and~$\sigma$, we can measure the relative resourcefulness of these states using a quantity known as the transformation rate~$R_{\epsilon}^{n}(\rho, \sigma)$. It is defined by
\begin{align}
    \label{eq:r_exp}
    R_{\epsilon}^{n}(\rho, \sigma) := \sup \Big\{r \Big| \inf_{\E\in \mathcal{O}} \delta\left(\E\left(\rho^{\ot n}\right),\sigma^{\ot \lfloor n r \rfloor}\right) \leq \epsilon\Big\},
\end{align}
and yields the maximal number of copies of the target state that can be obtained per copy of the initial state using only free operations, assuming that $n$ copies of the initial state are transformed and the transformation error $\epsilon$ is allowed. For generic resource theories, direct computation of the transformation rate is usually quite a difficult task, and instead the asymptotic rate and the second-order correction are studied. Namely, for large $n$, one can typically derive the following asymptotic expansion:
\begin{align} \label{eq:second_order}
    R_{\epsilon}^{n}(\rho, \sigma) = R(\rho, \sigma) - \frac{1}{\sqrt{n}} R_{\epsilon}'(\rho, \sigma) + o\left(\frac{1}{\sqrt{n}}\right),
\end{align}
where~$R(\rho, \sigma)$ is a constant describing the asymptotic transformation rate that is independent of the allowed constant error $\epsilon$, whereas~$R'_\epsilon$ is a function of~$\epsilon$ that describes the second order deviation from the asymptotic rate. These can be determined for many relevant resource theories (e.g., for resource theory of pure bipartite entanglement~\cite{kumagai2016second} and incoherent thermodynamics~\cite{chubb2018beyond}), and are usually expressed in terms of appropriate entropic quantifiers.

An important class of resource theories studied in this work are \emph{permutationally-free} resource theories~\cite{lipka2023catalysis}. In such resource theories permuting subsystems (i.e., physically swapping two subsytems) is allowed for free. To the best of our knowledge, the resource theories considered so far in the literature all fulfil this assumption either fully, or for specific subsystems, such as local subsystems in the resource theory of LOCC. In this work we focus exclusively on permutationally-free resource theories.

The set of transformations achievable via free operations may be further extended by allowing for the use of a catalyst~\cite{review2,lipka2023catalysis}. More precisely, we say that a transformation from~$\rho_{S}$ to~$\sigma_{S}$ is a \emph{catalytic transformation} if there exist a quantum system~$C$, a finite-dimensional density matrix~$\omega_{C}$, and a free operation~$\mathcal{E} \in \mathcal{O}$ acting on the joint system~$SC$, such that
\begin{align}
   \mathcal{E}(\rho_S \ot \omega_C)= \sigma_S \ot \omega_C.  
\end{align}
A variation of the above catalytic transformation that we investigate in this paper is the \emph{correlated-catalytic transformation}~\cite{Wilming2017}, where the catalyst locally has to be returned in the same reduced state, but may become correlated with the system. More formally, we say that there exists a correlated-catalytic transformation between $\rho_S$ and $\sigma_S$, if for some $\epsilon > 0$ there exist a catalyst $\omega_C$ and a free operation $\E$ such that
\begin{align}
    \label{eq:corr1}
   \mathcal{E}(\rho_S \ot \omega_C)= \eta_{SC},  
\end{align}
with the property that
\begin{equation}
    \label{eq:corr2}
    \eta_C = \omega_C\quad \mathrm{and} \quad \delta(\eta_S,\sigma_S)\leq \epsilon,
\end{equation}
where $\eta_C := \Tr_S[\eta_{SC}]$ and $\eta_S := \Tr_C[\eta_{SC}]$. This definition means that the free operation returns the catalyst locally undisturbed, however, it can still correlate it with the main system.

Finally, we will now introduce some notation which will be useful for describing our main findings. Specifically, for $R(\rho,\sigma)>1$ and given $\epsilon\in(0,1)$, let $n_\epsilon(\rho,\sigma)$ denote the smallest $n$ such that
    \begin{equation}
        R^n_\epsilon(\rho,\sigma) > 1.
    \end{equation}
    Given the second order asymptotic expansion of $R^n_\epsilon$ as in Eq.~\eqref{eq:second_order}, the quantity $n_\epsilon(\rho,\sigma)$ is given by the smallest $n$ satisfying
    \begin{equation}
    \label{eq:n_corr}
        \sqrt{n} > \frac{R'_\epsilon(\rho,\sigma)}{R(\rho,\sigma)-1+o(1/\sqrt{n})}.
    \end{equation}
    By ignoring he higher order asymptotic terms, we can approximate $n_\epsilon(\rho,\sigma)$ by 
    \begin{equation}
        \label{eq:approximation}
        n_\epsilon(\rho,\sigma) \approx \left(\frac{R'_\epsilon(\rho,\sigma)}{R(\rho,\sigma)-1}\right)^2.
    \end{equation}
It is generally difficult to say anything certain about the actual size of the corrections of the order $o(1/\sqrt{n})$ in Eq.~\eqref{eq:n_corr}. However, the approximation given in Eq.~\eqref{eq:approximation} typically works very well in practice, as we we will later see when discussing direct numerical examples.


\section{Transformations with finite catalysts}
\label{sec:transformations}

\new{In this section we begin our discussion of catalytic transformations with finite-size catalysts. To build up intuition we will start with a specific case of the resource theory of pure-state entanglement, and then discuss our results in the context of general resource theories.}

\subsection{\new{Relationship between catalytic and multi-copy transformations in LOCC}}

Catalytic transformations are closely related to multi-copy transformations. This was first observed in Ref.~\cite{Duan2005} in the context of pure-state LOCC transformations. Specifically, let $\ket{\psi}_S$  and $\ket{\phi}_S$ be two arbitrary pure bipartite entangled states such that it is possible to transform $n$ copies of $\ket{\psi}_S$ into $n$ copies of $\ket{\phi}_S$ via LOCC, i.e., 
\begin{equation}\label{eq:locc_multicopy}
    \ket{\psi}_S^{\ot n} \xrightarrow{\mathrm{LOCC}} \ket{\phi}_S^{\ot n}.
\end{equation}
It was observed in Ref.~\cite{Duan2005} that when the transformation above is possible, then there always exists a finite-dimensional pure bipartite state $\ket{\omega}_C$ such that 
\begin{equation}\label{eq:locc_multicopy2}
\ket{\psi}_S \ot \ket{\omega}_C \xrightarrow{\mathrm{LOCC}} \ket{\phi}_S \ot \ket{\omega}_C.
\end{equation}
In this sense, (exact) multi-copy transformations can be realized in a single-shot manner when using an appropriate quantum state as the catalyst. 

A natural question is whether Eq.~\eqref{eq:locc_multicopy2} also implies Eq.~\eqref{eq:locc_multicopy} for some $n$, which would mean that exact multi-copy and catalytic transformations are equivalent in terms of the states they can achieve. \new{It turns out that the opposite direction does not hold in general. In order to demonstrate this, Ref. \cite{duan_multiple-copy_2005} constructed an explicit example of a catalytic LOCC which can never be realized via a multi-copy transformation, no matter how many copies $n$ are used. However, the example presented in Ref. {\cite{duan_multiple-copy_2005}} is finely tuned; therefore, one can still hope to obtain an equivalence relation between catalytic and multicopy transformation for a \textit{subset} of states that satisfy a certain well-defined condition. We will now explain that this is indeed the case, and briefly sketch our reasoning. }

\new{We start by noting that Theorem $2$ from Ref.~\cite{klimesh_inequalities_2007} states that the existence of a catalytic LOCC transformation as in Eq.~\eqref{eq:locc_multicopy2} implies monotonicity of certain entropic quantifiers (see Ref.~\cite{klimesh_inequalities_2007} for more details). These entropic quantifiers are closely related to R\'{e}nyi entropies. Specifically, given two states, the aforementioned relationship can be shown to be equivalent to the following set of entropic inequalities
\begin{align}
    \label{eq:renyi_conds}
    H_{\alpha} (\psi_S) &> H_{\alpha} (\phi_S), \quad \alpha \in [-\infty, 0) \cup (0, \infty], \\
    H_{\text{Burg}}(\psi_S) &> H_{\text{Burg}}(\phi_S).
\end{align}
where $H_{\alpha}$ are R\'{e}nyi entropies of the reduced states and $H_{\text{Burg}}$ is the so-called \textit{Burg's entropy}~\cite{burg1972relationship}. Now, let us observe that in Ref.~\cite{jensen_asymptotic_2019} it was shown that a similar set of conditions, namely
\begin{align}
    \label{eq:renyi_conds}
    H_{\alpha} (\psi_S) &> H_{\alpha} (\phi_S), \qquad \alpha \in [0, \infty], 
\end{align}
implies the existence of a multi-copy transformation between $\ket{\psi_S}$ and $\ket{\phi_S}$. Therefore, if between these two states there exists a catalytic transformation such that $H_0(\psi_S) > H_0(\phi_S)$, then the conditions from Eq. \eqref{eq:renyi_conds} are satisfied, and there also exists a corresponding multi-copy transformation. Consequently, exact multi-copy transformations (with finite $n$) and catalytic transformations, for states satisfying Eq.~\eqref{eq:renyi_conds} for $\alpha = 0$, are equivalent in terms of achievable states.  }

\new{To focus attention we considered the particular case of the resource theory of pure-state entanglement. We note that the above reasoning can be easily extended to general majorization-based resource theories (e.g. athermality), i.e., theories in which state transformations are characterized via the majorization relation \cite{marshall1979inequalities}. While the result proved in Ref. \cite{jensen_asymptotic_2019} concerns only pure-state LOCC, the proof can be straightforwardly generalized to arbitrary state transformations which are governed by majorization (or d-majorization), see e.g. Lemma $7$ in Ref. \cite{Lipka-Bartosik2021}}

\subsection{\new{Relationship between catalytic and multi-copy transformations in permutationally-free resource theories}}

\new{We now move our discussion to the more general case of permutationally-free resource theories.} One might wonder how does the equivalence between catalytic and multi-copy transformations change when the multi-copy transformation is allowed to occur with some small and finite error, in the so-called asymptotic limit. This regime is especially interesting because very often the corresponding state transformations can be characterized using only a single quantity, namely the asymptotic rate appearing in Eq.~\eqref{eq:second_order}. In fact, similarly to the exact case, there exists an equivalence between asymptotic transformations and {correlated-catalytic} transformations. This connection has been recently investigated in various resource theories, most notably athermality~\cite{shiraishi2021quantum}, entanglement~\cite{Kondra_2021}, coherence~\cite{char2023catalytic}, and for unitary dynamics~\cite{wilming2021entropy,Wilming2022correlationsin}. It was shown that, in these resource theories, the necessary and sufficient conditions for the existence of an asymptotic transformation also completely characterize the partial order induced by the correlated-catalytic transformations. 

Interestingly, the proofs of the above statements allow to make an interesting observation. Specifically, while for asymptotic transformations the asymptotic rate becomes the relevant quantifier only when the number of copies $n$ tends to infinity, the same asymptotic rate characterizes correlated-catalytic state transformations only when the dimension of the catalyst diverges. In what follows we will investigate this observation in more detail. In particular, we will use the information provided by the second-order analysis of transformation rates in the asymptotic regime to estimate the dimension of the catalyst required by a given state transformation.

Our first technical Lemma summarizes the above discussion by combining the results of Refs. \cite{shiraishi2021quantum,Kondra_2021,char2023catalytic,wilming2021entropy,Wilming2022correlationsin}. Furthermore, it generalizes these results in two ways. First, it is valid for general multi-copy transformations (i.e. not only for the asymptotic case $n\to\infty$); and second, it is valid for all permutationally-free resource theories. Let \mbox{$\mathcal{R} = (\mathcal{S}, \mathcal{O})$} be such a resource theory with transformation rates between states $\rho$ and $\sigma$ given by $R_{\epsilon}^n(\rho,\sigma)$. Then, we have the following.

\begin{lemma}
    \label{lem:catalyst_asymptotic}
     Let $\rho$ and $\sigma$ be two density operators such that $R_{\epsilon}^{n}(\rho, \sigma) \geq 1$ for some $\epsilon > 0$ and $n \in \mathbb{N}$. Then, there exists a system $C$ in a state $\omega_C$ such that
    \begin{subequations}
    \begin{align}\label{eq:transf_1}
        \rho_S \ot \omega_{C} \xrightarrow{\mathcal{O}} \eta_{SC},
    \end{align}
    \end{subequations}
   where $\eta_{SC}$ satisfies the following three properties
   \begin{align}
    \norm{ \eta_S - \sigma_S}_1 &\leq \epsilon,    \\
    \eta_C &= \omega_C, \\
    \norm{\eta_{SC} - \sigma_S \ot \omega_C}_1 &\leq 2 \epsilon.
   \end{align}
   Moreover, the dimension of $C$ is at most $d_C = n d_S^{n-1}$, with $d_S$ being the dimension of the system $S$.
\end{lemma}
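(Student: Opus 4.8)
The starting point is to unpack the hypothesis $R^n_\epsilon(\rho,\sigma)\ge 1$. By the definition in Eq.~\eqref{eq:r_exp}, this guarantees a free operation $\mathcal{E}\in\mathcal{O}$ with
\begin{equation}
  \delta\!\left(\mathcal{E}(\rho^{\ot n}),\,\sigma^{\ot n}\right)\le\epsilon .
\end{equation}
I would write $\tau:=\mathcal{E}(\rho^{\ot n})$ for the (generally correlated) $n$-slot output; by monotonicity of the trace distance under the partial trace every single-slot marginal $\tau_i$ obeys $\delta(\tau_i,\sigma)\le\epsilon$. The plan is to recycle this one multi-copy map into a single-shot catalytic transformation by attaching a clock register $R$ of dimension $n$ and a work register $W=S^{\ot(n-1)}$, so that the catalyst $C=RW$ has exactly the advertised dimension $d_C=n\,d_S^{\,n-1}$. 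Together with the input system $S$, the pair $SW$ carries precisely $n$ copies of $S$, which is what $\mathcal{E}$ acts on.

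I would build the catalyst as a uniform mixture over the clock,
\begin{equation}
  \omega_C=\frac1n\sum_{k=0}^{n-1}\ketbra{k}{k}_R\ot \zeta^{(k)}_W ,
\end{equation}
where the work states $\zeta^{(k)}_W$ are the successive stages of a length-$n$ orbit generated from $\tau$: stage $k$ holds $k$ fresh copies of $\rho$ together with the reduced state of $\tau$ on the remaining $n-1-k$ work slots. The catalytic channel $\Lambda$ is defined conditionally on the clock: for the single value of $k$ at which $SW$ is in the configuration $\rho^{\ot n}$ it applies $\mathcal{E}$, emits one of the resulting ($\approx\sigma$) slots to the output register $S$ and stores the other $n-1$ in $W$; for every other clock value it performs only a permutation that swaps an already-processed slot out to $S$ while absorbing the fresh input $\rho$ into $W$. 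In all cases the clock is advanced $k\mapsto k+1\pmod n$. Crucially, all but one of these conditional branches are mere relabellings of subsystems, hence free \emph{precisely} because the resource theory is permutationally-free; the potentially costly map $\mathcal{E}$ is invoked only once per cycle.

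The exact return of the catalyst then follows from a symmetry argument rather than from any estimate. Since $\Lambda$ acts as the cyclic shift $k\mapsto k+1\pmod n$ on the clock while carrying stage $\zeta^{(k)}$ to stage $\zeta^{(k+1)}$, and since $\omega_C$ weights every clock value equally, the clock-plus-work marginal of $\Lambda(\rho_S\ot\omega_C)$ reproduces $\omega_C$ identically, giving $\eta_C=\omega_C$. The emitted system is, in each branch, a single-slot marginal of a permutation of $\tau$, so $\eta_S$ is $\epsilon$-close to $\sigma$ in trace distance; the residual approximation and system--catalyst correlation are then controlled by combining monotonicity of the trace distance under the partial trace and the free operations with the triangle inequality, which yields the stated $\ell_1$ bounds, including $\norm{\eta_{SC}-\sigma_S\ot\omega_C}_1\le 2\epsilon$.

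I expect the genuine difficulty to be exactly this last point: recovering the catalyst \emph{perfectly} while confining the unavoidable $\epsilon$-error of the approximate map $\mathcal{E}$ to the system alone. A naive construction that populates the work register with ideal copies of $\sigma$ returns the catalyst only up to $O(\epsilon)$; the remedy is to build the orbit from the \emph{actual} output $\tau=\mathcal{E}(\rho^{\ot n})$ and to average over a uniform clock, so that the one-shot map realizes a trace-distance-preserving cyclic permutation of a shift-invariant mixture. The price paid, and the reason the statement is phrased for correlated catalysis, is that emitting a slot of the correlated state $\tau$ while retaining its complement in $W$ generically leaves the system correlated with the catalyst, which is precisely what the third, weaker, bound accounts for.
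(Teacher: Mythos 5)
Your proposal is correct and follows essentially the same route as the paper's Appendix~A: the same catalyst (a uniform clock register tensored with staged work registers holding fresh copies of $\rho$ together with marginals of the \emph{actual} output $\tau=\mathcal{E}(\rho^{\ot n})$, which is exactly the paper's $\omega_C^n$ with $\chi^{n-i}:=\Tr_{1:i}[\chi^n]$), the same clock-conditioned free operation followed by a cyclic permutation, and the same contractivity-plus-triangle-inequality argument for the error bounds. You even identify the key subtlety — building the orbit from the actual output rather than from ideal copies of $\sigma$, so that shift-invariance returns the catalyst \emph{exactly} — which is precisely the mechanism underlying the paper's construction.
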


The proof of the lemma is a relatively straightforward generalization of the proofs presented in the works mentioned above. Hence, we postpone it to Appendix~\ref{app1}, and here simply discuss its significance. 

Firstly, the above general formulation suggests that finite-size corrections to the asymptotic rate, $R(\rho, \sigma) := \lim_{n \rightarrow \infty} R_{\epsilon}^{n}$, might allow for making conclusive statements about the dimension of the catalyst required by a given transformation. Secondly, it provides a useful method of determining \emph{sufficient} conditions for (correlated-catalytic) state transformations in a wide range of resource theories. Indeed, in many resource theories finding the asymptotic rate and its second order corrections is easier than working out the explicit single-shot characterizations of state transformations. Thirdly, the state of the catalyst enabling the transformation from Eq.~\eqref{eq:transf_1} can be determined as long as an explicit transformation achieving $R_{\epsilon}^n(\rho, \sigma)$ is known (see Appendix~\ref{app1} for details). 

\new{We emphasize that our results rely on the assumption that the asymptotic conversion rate $R_{\epsilon}^n(\rho, \sigma) \geq 1$. This is required by the techniques we employ: The construction of the catalytic state transformation we use (see Appendix~\ref{app1}) requires that $n$ copies of the initial state can be converted into \emph{at least} $n$ copies of the target state, for some $n \in \mathbb{N}$. In the case when $R_{\epsilon}^n(\rho, \sigma) < 1$ this is not the case, and hence we can no longer guarantee that the ansatz will enable a catalytic transformation.}

We now discuss our first application of Lemma~\ref{lem:catalyst_asymptotic}. Let us consider the resource theory of entanglement under LOCC (with \emph{mixed-states}). This resource theory is known to be very hard to characterize~\cite{chitambar2014everything}, i.e., there are no satisfactory results which would allow to determine if a general mixed state $\rho$ can be transformed into $\sigma$ via LOCC. In this case, however, Lemma~\ref{lem:catalyst_asymptotic} allows one to find a \emph{sufficient} condition for $\rho \rightarrow \sigma$ via correlated-catalytic LOCC in the case when $\rho$ and $\sigma$ are arbitrary mixed states. To see this, let us observe that when the asymptotic rate $R(\rho, \sigma) > 1$ then for any $\epsilon > 0$ there always exists a sufficently large $n$ such that $R^n_{\epsilon}(\rho, \sigma) > 1$. Consequently, when we are not interested in the dimension of the catalyst, we can simply use the asymptotic rate in Lemma~\ref{lem:catalyst_asymptotic}. In the case of LOCC transformations, the following known bound on the asymptotic rate can be found (see Appendix~\ref{app2} for details):
\begin{align}
    R(\rho, \sigma) \geq \frac{\max_{x \in \{A, B\}} S(\rho_x) - S(\rho)}{E(\sigma)},
\end{align}
where $E(\sigma)$ is a measure of entanglement defined via the convex-roof construction~\cite{Bennet1996_mixedstate}, i.e.,
\begin{align}
    E(\sigma) := \max_{\{p_i, \ket{\phi_i}\}}& \quad \sum_{i} p_i S(\phi_A^{i}), \\
    \text{subject to}& \quad \sigma = \sum_{i} p_i \dyad{\phi_i},
\end{align}
with $\ket{\phi_i}$ being arbitrary pure states and $\{p_i\}$ a discrete probability distribution. Using the above along with Lemma~\ref{lem:catalyst_asymptotic} we can conclude the following.

\begin{corollary}\label{corr:ent}
    A correlated-catalytic LOCC transforming $\rho$ into $\sigma$ exists if 
\begin{align}
    \label{eq:locc_suff}
    \max_{x \in \{A, B\}} S(\rho_x)  - S(\rho) \geq E(\sigma).
\end{align}
\end{corollary}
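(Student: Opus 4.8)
The plan is to combine the entanglement-rate bound stated just above the corollary with Lemma~\ref{lem:catalyst_asymptotic}, since the corollary is essentially a direct specialization of that general machinery to the LOCC setting. First I would observe that the hypothesis $\max_{x\in\{A,B\}} S(\rho_x) - S(\rho) \geq E(\sigma)$ is precisely the statement that the lower bound on the asymptotic rate given in the preceding display is at least $1$, i.e. $R(\rho,\sigma) \geq 1$. The strict-inequality case $R(\rho,\sigma) > 1$ is the one I would treat first, as it cleanly triggers the hypothesis of Lemma~\ref{lem:catalyst_asymptotic}.

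\textbf{Main argument.} Assuming the strict inequality, I would argue that $R(\rho,\sigma) > 1$ implies, via the second-order expansion in Eq.~\eqref{eq:second_order}, that for \emph{any} fixed error $\epsilon > 0$ there exists a sufficiently large finite $n \in \mathbb{N}$ with $R_\epsilon^n(\rho,\sigma) > 1$, hence in particular $R_\epsilon^n(\rho,\sigma) \geq 1$. Indeed, since $R_\epsilon^n \to R(\rho,\sigma) > 1$ as $n \to \infty$ with the correction term of order $1/\sqrt{n}$ vanishing, the threshold $n_\epsilon(\rho,\sigma)$ defined in Eq.~\eqref{eq:approximation} is finite. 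Once $R_\epsilon^n(\rho,\sigma) \geq 1$ is secured for some finite $n$, I would simply invoke Lemma~\ref{lem:catalyst_asymptotic}, which then furnishes a catalyst $\omega_C$ of dimension at most $d_C = n\, d_S^{n-1}$ and a free operation $\E \in \mathcal{O}$ (here $\mathcal{O} = \mathrm{LOCC}$) realizing $\rho_S \ot \omega_C \xrightarrow{\mathrm{LOCC}} \eta_{SC}$ with $\eta_C = \omega_C$ and $\norm{\eta_S - \sigma_S}_1 \leq \epsilon$. Since $\epsilon > 0$ was arbitrary, this is exactly a correlated-catalytic LOCC transformation of $\rho$ into $\sigma$ in the sense of Eqs.~\eqref{eq:corr1}--\eqref{eq:corr2}, completing the proof.

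\textbf{The obstacle.} The main subtlety I expect is the boundary case of equality, $\max_{x} S(\rho_x) - S(\rho) = E(\sigma)$, which only guarantees $R(\rho,\sigma) \geq 1$ rather than the strict inequality needed to drive $R_\epsilon^n > 1$ at finite $n$. When the lower bound on the rate equals $1$ exactly, the preceding display gives $R(\rho,\sigma) \geq 1$, and one must verify that the rate is genuinely achievable with the required margin. I would handle this either by noting that the inequality in the bound $R(\rho,\sigma) \geq (\max_x S(\rho_x) - S(\rho))/E(\sigma)$ may be strict (so $R(\rho,\sigma) > 1$ holds even at equality of the entropic quantities), or by appealing to an approximation/continuity argument: perturb $\sigma$ slightly to reduce $E(\sigma)$, establish the correlated-catalytic transformation to the perturbed target within the allowed error budget, and absorb the perturbation into the overall trace-distance error $\epsilon$. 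The remaining steps -- verifying that the lower bound on the asymptotic rate indeed exceeds $1$ under the stated hypothesis and that Lemma~\ref{lem:catalyst_asymptotic} applies with $\mathcal{O} = \mathrm{LOCC}$ being permutationally-free on the local subsystems -- are routine given the results already established in the excerpt.
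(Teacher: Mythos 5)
Your proposal follows essentially the same route as the paper: the entropic hypothesis makes the Appendix~\ref{app2} lower bound on the asymptotic rate at least $1$, which (via the observation that $R(\rho,\sigma)>1$ yields $R^n_\epsilon(\rho,\sigma)>1$ at some finite $n$) triggers Lemma~\ref{lem:catalyst_asymptotic} with $\mathcal{O}=\mathrm{LOCC}$. Your treatment of the boundary case $\max_x S(\rho_x)-S(\rho)=E(\sigma)$ (perturbing $\sigma$ to strictly lower its convex-roof entanglement and absorbing the perturbation into $\epsilon$) is in fact more careful than the paper, which states the corollary with ``$\geq$'' but argues only under the strict inequality $R(\rho,\sigma)>1$.
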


\noindent When $\rho$ and $\sigma$ are both pure we have $S(\rho) = 0$ and $E(\sigma) = S(\sigma_B) = S(\sigma_A)$, and therefore Eq.~\eqref{eq:locc_suff} reads
\begin{align}
    S(\rho_A) \geq S(\sigma_A),
\end{align}
which is sufficient and necessary for \emph{pure-state} correlated-catalytic LOCC  transformations~\cite{Kondra_2021}. 

Importantly, Lemma~\ref{lem:catalyst_asymptotic} formally connects \emph{arbitrary} multi-copy transformations with correlated-catalytic transformations in permutationally-free resource theories. This connection is valid not only in the asymptotic regime of infinitely many copies, but also for multi-copy transformations with any number of copies and any transformation error. This realization is crucial for us and will be explored in detail in the remaining part of the paper. Notably, our next theorem uses this realization to connect the second-order corrections to asymptotic rates with the size of the catalyst used in correlated-catalytic state transformations. 

Let $\mathcal{R} = (\mathcal{S}, \mathcal{O})$ be any permutationally-free resource theory with an asymptotic rate expansion given by Eq.~\eqref{eq:second_order} and $\rho$ and $\sigma$ be two density operators of dimension $d_S$. Then we have the following:

\begin{theorem}
    \label{thm:main}
    If $R(\rho, \sigma) > 1$, then there exists a correlated-catalytic transformation from $\rho$ to $\sigma$ with an error $\epsilon$ and using a catalyst of dimension $d_C$ given by
    \begin{align}
        \label{eq:log_dc}
        \log d_C = \log n_\epsilon(\rho,\sigma) + (n_\epsilon(\rho,\sigma)-1) \log d_S,
    \end{align}
    where, ignoring the higher order asymptotic terms, $n_\epsilon(\rho,\sigma)$ can be approximated by Eq.~\eqref{eq:approximation}.
\end{theorem}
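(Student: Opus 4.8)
The plan is to obtain Theorem~\ref{thm:main} as an essentially immediate consequence of Lemma~\ref{lem:catalyst_asymptotic}, combined with the second-order expansion of the transformation rate in Eq.~\eqref{eq:second_order}. The central observation is that the hypothesis $R(\rho,\sigma) > 1$, together with the fact that $R^n_\epsilon(\rho,\sigma) \to R(\rho,\sigma)$ as $n \to \infty$ (which follows directly from Eq.~\eqref{eq:second_order}, since the correction terms vanish in this limit), guarantees the existence of a \emph{finite} number of copies $n$ for which $R^n_\epsilon(\rho,\sigma) > 1$. By definition, the smallest such $n$ is precisely $n_\epsilon(\rho,\sigma)$, so this quantity is finite and well defined whenever $R(\rho,\sigma) > 1$.

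First I would fix $\epsilon > 0$ and set $n = n_\epsilon(\rho,\sigma)$. Since $R^{n_\epsilon}_\epsilon(\rho,\sigma) > 1 \geq 1$, the hypothesis of Lemma~\ref{lem:catalyst_asymptotic} is met for this choice of $n$. Invoking the lemma then produces a catalyst $\omega_C$ and a free operation that realize the correlated-catalytic transformation $\rho_S \to \sigma_S$ up to trace-distance error $\epsilon$, with the catalyst dimension bounded by $d_C \leq n_\epsilon(\rho,\sigma)\, d_S^{\,n_\epsilon(\rho,\sigma)-1}$. Taking the logarithm of this bound reproduces exactly Eq.~\eqref{eq:log_dc}, namely $\log d_C = \log n_\epsilon(\rho,\sigma) + (n_\epsilon(\rho,\sigma)-1)\log d_S$, which establishes the main claim.

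To recover the explicit estimate for $n_\epsilon(\rho,\sigma)$, I would insert the expansion of Eq.~\eqref{eq:second_order} into the defining inequality $R^n_\epsilon(\rho,\sigma) > 1$. Writing this as $R(\rho,\sigma) - R'_\epsilon(\rho,\sigma)/\sqrt{n} + o(1/\sqrt{n}) > 1$ and rearranging yields the condition in Eq.~\eqref{eq:n_corr}; discarding the higher-order $o(1/\sqrt{n})$ contributions then gives the closed-form approximation of Eq.~\eqref{eq:approximation}. Substituting this into Eq.~\eqref{eq:log_dc} furnishes the practically useful expression for $\log d_C$.

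Since every step above is elementary once Lemma~\ref{lem:catalyst_asymptotic} is available, the theorem is in effect a corollary. The only genuinely delicate point I anticipate is controlling the neglected asymptotic terms: the equality in Eq.~\eqref{eq:log_dc} uses the \emph{exact} value of $n_\epsilon(\rho,\sigma)$, whereas the usable formula replaces it with the approximation of Eq.~\eqref{eq:approximation}, which is justified only by dropping the $o(1/\sqrt{n})$ terms in Eq.~\eqref{eq:n_corr}. As no uniform bound on these corrections is available in general, I would present Eq.~\eqref{eq:approximation} as an approximation whose accuracy must be verified case by case — as is done in the later numerical examples — rather than as a rigorous bound on $d_C$.
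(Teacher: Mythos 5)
Your proposal is correct and follows essentially the same route as the paper: both arguments note that $R(\rho,\sigma)>1$ together with the expansion in Eq.~\eqref{eq:second_order} guarantees a finite $n$ with $R^n_\epsilon(\rho,\sigma)>1$, take $n=n_\epsilon(\rho,\sigma)$ as the smallest such value, and invoke Lemma~\ref{lem:catalyst_asymptotic} to obtain the catalyst of dimension $n_\epsilon(\rho,\sigma)\,d_S^{\,n_\epsilon(\rho,\sigma)-1}$, with Eq.~\eqref{eq:approximation} arising from dropping the $o(1/\sqrt{n})$ terms in Eq.~\eqref{eq:n_corr}. Your closing caveat that Eq.~\eqref{eq:approximation} is a heuristic to be checked numerically rather than a rigorous bound matches the paper's own discussion following the theorem.
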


\begin{proof}
Let us observe that $R(\rho, \sigma) > 1$ implies that for all $\epsilon > 0$ there exists $n \in \mathbb{N}$ sufficiently large, so that we have $R_{\epsilon}^{n}(\rho,\sigma) \geq 1$. Recall that for $R(\rho,\sigma)>1$ and given $\epsilon\in(0,1)$ we defined $n_\epsilon(\rho,\sigma)$ precisely as the smallest $n$ for which $R^n_\epsilon(\rho,\sigma) > 1$. Consequently, by Lemma~\ref{lem:catalyst_asymptotic}, there exists a catalyst with dimension $d_C$ given by Eq.~\eqref{eq:log_dc} that enables a correlated-catalytic transformation from $\rho$ to $\sigma$ with an error $\epsilon$.   
\end{proof}

The above theorem relies on the ability to perform a formal expansion of the rate $R_{\epsilon}^{n}$ in $1/\sqrt{n}$. In general, such an expansion may not always be easy to obtain. Still, for many relevant resource theories the explicit forms of $R$ and $R_{\epsilon}'$ from Eq.~\eqref{eq:second_order} are known. \new{Interestingly, this is true even for resource theories which are not characterized using the majorization partial order, see e.g. Ref.~\cite{fang2019non}.} 

\new{Incorporating explicit forms of the second-order transformation rates} allow one to go one step further and turn the slightly abstract formula from Eq.~\eqref{eq:log_dc} into a more insightful form. In the next section we will explore this observation in the context of athermality, pure-state LOCC entanglement and unitary transformations. We emphasize that the reasoning we will present is general and can be adapted to any convex resource theory for which the first and the second-order asymptotic rates can be characterized. 

Finally, we note that one should view Theorem~\ref{thm:main} as a guide for choosing the dimension of the catalyst. This is because the lower-order terms appearing in the expression from Eq.~\eqref{eq:n_corr} defining $n_\epsilon(\rho,\sigma)$ usually cannot be easily characterized and can thus, in principle, be significant even for moderately large $n$. As we will see, in practice the lower-order terms vanish relatively quickly for large $n$, meaning that the approximation from Eq.~\eqref{eq:approximation} works very well even for small catalyst dimensions.


\begin{figure*}
    \centering
    \includegraphics[width=0.75\columnwidth]{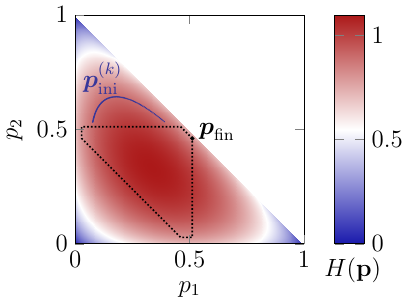}
    \hspace{2cm}
    \includegraphics[width=0.75\columnwidth]{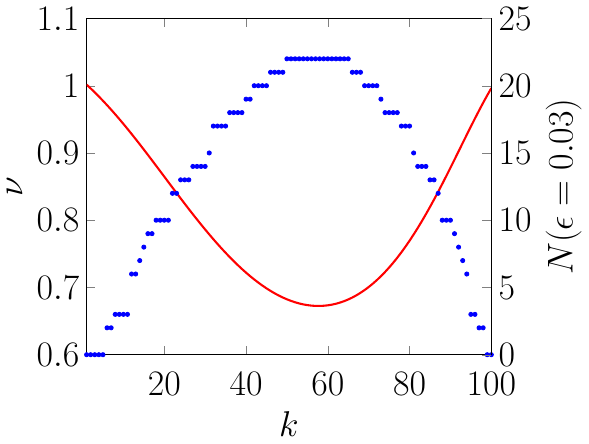}
    \caption{\label{fig:resonance} \textbf{Catalytic resource resonance.} Left: the target pure bipartite state, represented by a Schmidt vector~$\v{p}_{\mathrm{fin}}$ with entanglement entropy~\mbox{$H(\v{p}_{\mathrm{fin}})=0.8$}, together with a set of initial states, represented by~$\v{p}_{\mathrm{ini}}^{(k)}$ with~\mbox{$k\in\{1,\dots 100\}$} (all with entanglement entropy~\mbox{$H(\v{p}^{(k)}_{\mathrm{ini}})=0.9$}), presented at the entropy landscape for 3-dimensional probability vectors. The states with Schmidt vectors inside the region encircled by a dashed line can be transformed to~$\v{p}_{\mathrm{fin}}$ with no error and no use of catalysts, using local operations and classical communication. Right:~The dimension of the catalyst, $d_C=3^N$, needed to transform~$\v{p}_{\mathrm{ini}}^{(k)}$ to~$\v{p}_{\mathrm{fin}}$ using the construction from Lemma~\ref{lem:catalyst_asymptotic} with transformation error bounded by~$0.03$ (blue dots). Note the correlation with the value of the reversibility parameter~$\nu$ (solid red line): for~$\nu=1$ the resonance condition is fulfilled and only a small catalyst is needed, whereas as one gets farther from the resonance, the dimension of the catalyst needed grows. 
    }   
\end{figure*}

\subsection{Finite-size catalysis for pure-state entanglement}
\label{sec:ent}

In this section we show how to apply Theorem~\ref{thm:main} in the case of the resource theory of pure state entanglement $\mathcal{R}_{E} = (\mathcal{S}_E, \mathcal{O}_E)$~\cite{nielsen1999conditions}. In this resource theory the set of free operations $\mathcal{O}_E$ is given by all LOCC transformations between pure bipartite states represented by their Schmidt vectors $\v{p}$ and $\v{q}$. The set of free states $\mathcal{S}_E$ corresponds to all pure separable states which are represented by a uniform vector of Schmidt coefficients. In this resource theory,the first and the second order expansion terms appearing in Eq.~\eqref{eq:second_order} were first found in Refs.~\cite{kumagai2016second,chubb2018beyond} for the error measured by infidelity distance. Here we will use their form derived in Ref.~\cite{lipkabartosik2023quantum} for the error measured by the trace distance. 

The transformation rates for this resource theory are then given by
\begin{align}
    \label{eq:locc_expansion}
    R(\v{p}, \v{q}) = \frac{H(\v{p})}{H(\v{q})},\quad& \,\, R_{\epsilon}'(\bm{p},\bm{q}) = \frac{\sqrt{V(\v{p})}}{H(\v{q})}f_{\nu}(\epsilon),
\end{align}
where $f_{\nu}(\epsilon)$ is the \emph{sesqui-normal} distribution defined by~\cite{lipkabartosik2023quantum}:
\begin{align}
    f_{\nu}(\epsilon):=\inf_{\epsilon<x<1}\left[\sqrt{\nu}\Phi^{-1}(x)-\Phi^{-1}(x-\epsilon)\right],
\end{align}
with $\Phi^{-1}(x)$ denoting the inverse of the standard normal cumulative distribution function, and $\nu$ is the \emph{resonance} parameter,
\begin{align}
    \nu &= \frac{V(\v{p}) / H(\v{p})}{V(\v{q}) / H(\v{q})}.
\end{align}
Note that $H(\v{p})$ is the entanglement entropy measuring entanglement content of a state, and $V(\v{p})$ can be interpreted as a measure of fluctuations of the entanglement content of the state. 
 
 The above results allow us to express the dimension of the catalyst $d_{C}$ using Eq.~\eqref{eq:log_dc} with $n_{\epsilon}$ that can be approximated by
\begin{subequations}
\begin{align}
    n_{\epsilon}(\v{p}, \v{q}) \approx&  \frac{ V(\v{p})}{[H(\v{p})-H(\v{q})]^2} f_{\nu}^2(\epsilon).
\end{align}
\end{subequations}
As a consequence, we obtain the following corollary (see Appendix~\ref{app:corr} for the proof).
\begin{corollary}\label{cor2}
    There exists a correlated-catalytic LOCC between pure bipartite states represented by Schmidt vectors $\v{p}$ and $\v{q}$ with an error $\epsilon$ that can be implemented with a $d_C$-dimensional catalyst if
    \begin{align}
        \label{eq:ent_bound}
        H(\v{p})\!-\! H(\v{q}) > \sqrt{\frac{V(\v{p}) \log d_{S}}{\log d_C}}f_{\nu}(\epsilon)\! +\! o\left(\!\frac{1}{\sqrt{\log d_C}}\!\right).
    \end{align}
\end{corollary}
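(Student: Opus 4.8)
The plan is to derive Corollary~\ref{cor2} directly from Theorem~\ref{thm:main} by inverting the relationship between the catalyst dimension $d_C$ and the required number of copies $n_\epsilon(\v{p},\v{q})$. Theorem~\ref{thm:main} tells us that a correlated-catalytic LOCC with error $\epsilon$ exists whenever $R(\v{p},\v{q})>1$, with the catalyst dimension fixed by Eq.~\eqref{eq:log_dc}. So the entire task reduces to rephrasing the condition $R(\v{p},\v{q})>1$ (equivalently $H(\v{p})>H(\v{q})$) together with the explicit expression for $n_\epsilon$ into the single inequality~\eqref{eq:ent_bound} relating the entanglement entropies to $\log d_C$.

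First I would take the expression for the catalyst dimension from Eq.~\eqref{eq:log_dc}, namely $\log d_C = \log n_\epsilon + (n_\epsilon-1)\log d_S$. Since the dominant term for large $n_\epsilon$ is $(n_\epsilon-1)\log d_S$, I would write $\log d_C = n_\epsilon \log d_S + O(\log n_\epsilon)$ and solve for $n_\epsilon$ to leading order, obtaining $n_\epsilon = \log d_C / \log d_S + o(\log d_C/\log d_S)$. The logarithmic correction $\log n_\epsilon$ and the constant $-\log d_S$ are both subleading relative to $n_\epsilon \log d_S$, so they are absorbed into the error term; I would keep track of exactly how they contribute to the $o(1/\sqrt{\log d_C})$ remainder once the square root is taken.

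Next I would substitute this asymptotic inversion into the approximate formula for $n_\epsilon$ in the LOCC setting,
\begin{align}
    n_\epsilon(\v{p},\v{q}) \approx \frac{V(\v{p})}{[H(\v{p})-H(\v{q})]^2}\, f_\nu^2(\epsilon).
\end{align}
Setting the two expressions for $n_\epsilon$ equal and rearranging gives
\begin{align}
    [H(\v{p})-H(\v{q})]^2 \approx \frac{V(\v{p})\, f_\nu^2(\epsilon)\, \log d_S}{\log d_C}.
\end{align}
Taking the square root (and noting $H(\v{p})>H(\v{q})$ so both sides are positive) yields precisely the leading term $\sqrt{V(\v{p})\log d_S/\log d_C}\, f_\nu(\epsilon)$ appearing in Eq.~\eqref{eq:ent_bound}. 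The passage from an approximate equality for $n_\epsilon$ to a sufficient condition is what converts the ``$\approx$'' into the ``$>$'' of the corollary: if $H(\v{p})-H(\v{q})$ strictly exceeds this value then the required $n_\epsilon$ is achieved at a dimension no larger than $d_C$.

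The main obstacle I anticipate is bookkeeping the error terms cleanly, so that every approximation made—both the inversion of Eq.~\eqref{eq:log_dc} and the use of the approximation~\eqref{eq:approximation} for $n_\epsilon$ in place of its exact defining inequality~\eqref{eq:n_corr}—collapses consistently into a single $o(1/\sqrt{\log d_C})$ remainder. In particular, the higher-order $o(1/\sqrt{n})$ corrections in the rate expansion~\eqref{eq:second_order} feed into $n_\epsilon$ through Eq.~\eqref{eq:n_corr}, and I would need to verify that after inverting $d_C$ as a function of $n_\epsilon$ and taking the square root, these remain of the stated order. The monotonicity and continuity of the map $n_\epsilon \mapsto \log d_C$ make the inversion well-defined for large enough dimension, so the argument is essentially asymptotic analysis rather than anything requiring new structural input; the content lies entirely in Theorem~\ref{thm:main} and the explicit second-order rates of Eq.~\eqref{eq:locc_expansion}.
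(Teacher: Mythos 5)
Your proposal is correct and takes essentially the same route as the paper's own proof (given in Appendix~\ref{app:corr} for Corollary~\ref{cor3}, with Corollary~\ref{cor2} stated to follow identically): combine Theorem~\ref{thm:main} with the explicit second-order rates of Eq.~\eqref{eq:locc_expansion}, use $\log d_C \approx n_\epsilon \log d_S$ from Eq.~\eqref{eq:log_dc} together with the defining inequality for $n_\epsilon$, and rearrange while absorbing subleading corrections into the $o(1/\sqrt{\log d_C})$ term. The only cosmetic difference is that you invert $d_C$ as a function of $n_\epsilon$ and equate two expressions for $n_\epsilon$, whereas the paper chains the inequalities $\sqrt{\log d_C} > \sqrt{n_\epsilon \log d_S}$ and $\sqrt{n_\epsilon} > R'_\epsilon/(R-1) + o(1/\sqrt{n})$ directly; the mathematical content is identical.
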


The above corollary outlines several aspects of catalytic resource transformations. First, when the difference of the asymptotic resource content (i.e. entanglement entropy) between the initial and final states approaches zero, the size of the catalyst diverges to infinity. Furthermore, we see that keeping the asymptotic resource content fixed, but scaling resource fluctuations of both states by $\alpha$ (so that $\nu$ stays constant), the size of the catalyst also scales by $\alpha$. This shows that it is in general harder to transform states with higher resource fluctuations. Finally, when $\nu = 1$, i.e., when the initial and final state are in resonance, the dominant terms on the right hand sides of Eq.~\eqref{eq:ent_bound} vanish for $\epsilon\to 0$. As a result, for a given dimension of the catalyst, the transformation between a pair of states in resonance can be performed with a much smaller error. This is reminiscent of the resource resonance phenomenon that appears in the study of second-order corrections to the asymptotic transformation rates in majorization-based resource theories~\cite{kumagai2016second,korzekwa2019avoiding}.

To get more insight into Corollary~\ref{cor2}, let us consider an illustrative example. Assume we have access to a set of pure bipartite initial states represented by Schmidt vectors $\{\v{p}^{(k)}_{\mathrm{ini}}\}$ with equal entanglement entropies equal to $H_{\mathrm{ini}}$. We know that a final state $\v{p}_{\mathrm{fin}}$ with   $H(\v{p}_{\mathrm{fin}})<H_{\mathrm{ini}}$ can be obtained via a catalytic transformation from any of the initial states. However, despite the fact that all initial states have the same asymptotic resource content, for some of them we may need a much smaller catalyst enabling the transformation with a given error $\epsilon$. We illustrate this in Fig.~\ref{fig:resonance} for entangled qutrits, where we show that the size $d_C$ of the catalyst needed when the initial state is in resonance with $\v{p}_{\mathrm{fin}}$ is many orders of magnitudes smaller than when it is far from resonance. This clearly indicates that the resonance condition, studied so far within the context of asymptotic transformations, is also relevant for single-shot catalytic resource transformations.


\subsection{Finite-size catalysis for incoherent quantum thermodynamics}
\label{sec:thermo}

We now apply Theorem~\ref{thm:main} to the resource theory of athermality~\cite{Janzing2000,horodecki2013fundamental} $\mathcal{R}_{A} = (\mathcal{S}_A, \mathcal{Q}_A)$. In this resource theory the set of free states $\mathcal{S}_A$ is given by all thermal (Gibbs) states, denoted with $\gamma$, of a given (fixed) temperature. The free operations are then so-called Gibbs-preserving operations, i.e. all quantum channels which preserve the Gibbs state. The first and the second order expansion terms appearing in Eq.~\eqref{eq:second_order} were first found in Refs.~\cite{chubb2018beyond} for the error measured by infidelity distance, but here again we will use their form derived in Ref.~\cite{lipkabartosik2023quantum} for the error measured by trace distance. We will consider thermodynamic transformation between energy incoherent states $\rho$ and $\sigma$. The transformation rates in this case are given by
\begin{align}
    \label{eq:to_expansion}
    R(\rho, \sigma) = \frac{D(\rho\|\gamma)}{D(\sigma\|\gamma)},\quad& \,\, R_{\epsilon}' = \frac{\sqrt{V(\rho\|\gamma)}}{D(\sigma\|\gamma)}f_{\nu}(\epsilon).
\end{align}
where the resonance parameter $\nu$ is now given by
\begin{align}
    \nu &= \frac{V(\rho\|\gamma) / D(\rho\|\gamma)}{V(\sigma\|\gamma) / D(\sigma\|\gamma)}.
\end{align}
 Note that $D(\rho\|\gamma)$ can be interpreted as non-equilibrium free energy~\cite{brandao2013resource}, and $V(\sigma\|\gamma)$ can be seen as the free energy fluctuations~\cite{chubb2018beyond,biswas2022fluctuation}.
 
 The above results allow us to express the dimension of the catalyst $d_{C}$ using Eq.~\eqref{eq:log_dc} with $n_{\epsilon}$ given by
\begin{align}
    n_{\epsilon}(\rho, \sigma) \approx&  \frac{ V(\rho\|\gamma)}{[D(\rho\|\gamma) - D(\sigma\|\gamma)]^2} f_{\nu}^2(\epsilon).
\end{align}
As a consequence, we obtain the following corollary (see Appendix~\ref{app:corr} for the proof),
\begin{corollary}\label{cor3}
    There exists a correlated-catalytic thermal operation between incoherent states $\rho$ and $\sigma$ with an error $\epsilon$ that can be implemented with a $d_C$-dimensional catalyst if
    \begin{align}
        \label{eq:thermo_bound}
        D(\rho\|\gamma)\!-\! D(\sigma\|\gamma) > \sqrt{\frac{V(\rho\|\gamma) \!\log\! d_{S}\!}{\log \! d_C}}f_{\nu}(\epsilon)\! +\! o\left(\!\frac{1}{\sqrt{\log\! d_C}}\!\right)\!.\!
    \end{align}
\end{corollary}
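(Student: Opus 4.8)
The plan is to obtain the stated bound purely by inverting the dimension formula of Theorem~\ref{thm:main}, specialized to the thermodynamic transformation rates of Eq.~\eqref{eq:to_expansion}. The starting observation is that the hypothesised inequality forces the asymptotic rate to exceed one: since its right-hand side is positive (the leading term dominates the $o(1/\sqrt{\log d_C})$ correction for large $d_C$), we have $D(\rho\|\gamma) - D(\sigma\|\gamma) > 0$, hence $R(\rho,\sigma) = D(\rho\|\gamma)/D(\sigma\|\gamma) > 1$ and Theorem~\ref{thm:main} applies. It therefore suffices to show that the value of $n_\epsilon(\rho,\sigma)$ produced by the thermodynamic rates, when fed through Eq.~\eqref{eq:log_dc}, is consistent with the claimed catalyst dimension $d_C$.

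First I would substitute the thermodynamic second-order data $R(\rho,\sigma)$ and $R'_\epsilon$ from Eq.~\eqref{eq:to_expansion} into the approximation of Eq.~\eqref{eq:approximation}, yielding $n_\epsilon(\rho,\sigma) \approx V(\rho\|\gamma)\,f_\nu^2(\epsilon) / [D(\rho\|\gamma) - D(\sigma\|\gamma)]^2$. Next I would invoke Eq.~\eqref{eq:log_dc}, $\log d_C = \log n_\epsilon + (n_\epsilon - 1)\log d_S$, and note that in the regime where the catalyst is large (equivalently, where the entropic gap is small and $n_\epsilon$ grows) the term $(n_\epsilon-1)\log d_S$ dominates the logarithmic term $\log n_\epsilon$. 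Keeping the leading behaviour gives $n_\epsilon = (\log d_C / \log d_S)\,(1 + o(1))$, and equating this with the thermodynamic $n_\epsilon$ and solving for the entropic gap reproduces $D(\rho\|\gamma) - D(\sigma\|\gamma) = \sqrt{V(\rho\|\gamma)\log d_S / \log d_C}\, f_\nu(\epsilon)\,(1+o(1))$. Reading this as a sufficient condition — requiring the gap to be at least this large, so that the implied $n_\epsilon$ is small enough that the dimension prescribed by Eq.~\eqref{eq:log_dc} does not exceed $d_C$ — yields exactly Eq.~\eqref{eq:thermo_bound}.

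The structure is identical to the entanglement case, so in practice I would run the argument of Corollary~\ref{cor2} (Appendix~\ref{app:corr}) under the dictionary $H(\v p)\mapsto D(\rho\|\gamma)$, $H(\v q)\mapsto D(\sigma\|\gamma)$, $V(\v p)\mapsto V(\rho\|\gamma)$, with $\nu$ and $f_\nu(\epsilon)$ taken from the athermality expressions. The main obstacle, and the only place requiring genuine care, is the asymptotic bookkeeping in the inversion of Eq.~\eqref{eq:log_dc}: one must verify that the discarded $\log n_\epsilon$ contribution, the $-1$ shift in $(n_\epsilon - 1)$, and the residual $o(1/\sqrt{n})$ corrections hidden in the definition of $n_\epsilon$ (cf. Eq.~\eqref{eq:n_corr}) all collapse into the single $o(1/\sqrt{\log d_C})$ error term quoted in the corollary, and in particular that none of them flips the direction of the sufficient-condition inequality.
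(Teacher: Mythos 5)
Your proposal is correct and takes essentially the same route as the paper's proof in Appendix~\ref{app:corr}: substitute the athermality rates of Eq.~\eqref{eq:to_expansion} into Theorem~\ref{thm:main}, invert Eq.~\eqref{eq:log_dc} using $\log d_C = n_\epsilon(\rho,\sigma)\log d_S\,(1+o(1))$, and solve for the free-energy gap to obtain Eq.~\eqref{eq:thermo_bound}, with the sufficient-condition reading you describe. The asymptotic bookkeeping you flag (the discarded $\log n_\epsilon$, the $-1$ shift, and the residual $o(1/\sqrt{n})$ terms) is treated in the paper at the same level of rigor — via the crude bound $\log d_C > n_\epsilon(\rho,\sigma)\log d_S$ and absorbing all corrections into the single $o\left(1/\sqrt{\log d_C}\right)$ term — so your plan matches the published argument, including your explicit (and in the paper only implicit) observation that the hypothesis forces $R(\rho,\sigma)>1$ whenever $f_\nu(\epsilon)>0$.
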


To get some insight into Corollary~\ref{cor3}, let us consider thermodynamic transformations in the limit of infinite temperature, i.e., when the thermal Gibbs state $\gamma$ becomes a maximally mixed state. Here, for given initial and final incoherent states, $\rho_{\mathrm{ini}}$ and $\rho_{\mathrm{fin}}$, one can use majorization theory and the construction from the proof of Lemma~\ref{lem:catalyst_asymptotic} to numerically calculate the optimal transformation error $\epsilon$ for a correlated-catalytic transformation with a catalyst of size $N d_S^N$ with $N$ being a natural number. Note, however, that computational resources required for this numerical analysis grow very rapidly with $d_S$. We show the results for $d_S=3$ in Fig.~\ref{fig:error} and compare them to the analytic prediction from Corollary~\ref{cor3}. A very good agreement already for small $N$ suggests that the analytic bounds for the size of the catalyst obtained in this paper may be useful when explicit optimisations become impossible.


\subsection{Finite-size catalysis for unitary transformations}
\label{sec:mechanics}

As our final application of Theorem~\ref{thm:main} we will investigate unitary state transformations between density matrices, \new{which can be seen as a resource theory of unitary quantum mechanics~\cite{boes_von_2019,wilming_entropy_2020}}. \new{In this line, Ref.~\cite{lostaglio_stochastic_2015} introduced a different notion of catalytic transformations they termed \textit{marginal-correlated catalytic transformations}, in which  the catalyst is composed of multiple subsystems. While the allowed transformations cannot modify reduced states of catalytic subsystems, they can introduce arbitrary correlations between them. More precisely, we focus on unitary transformations of the form
\begin{align} \label{eq:unitary_corr}
    U(\rho_{S} \ot \omega_{C_1} \ot \dots\ot \omega_{C_n}) U^{\dagger} = \eta_{SC_1\dots C_n},
\end{align}
with each $\eta_{C_i} = \omega_{C_i}$ and $\delta(\eta_S,\sigma_S)\leq \epsilon$ for some unitary $U$ and some density matrices $\omega_{C_i}$.} One can view the above class of transformations as defining a resource theory of unitary quantum mechanics $\mathcal{R}_{\text{QM}} = (\mathcal{S}_{\text{QM}}, \mathcal{O}_{\text{QM}})$. In this case, using our previous notation, the set of free states $\mathcal{S}_{\text{QM}}$ is empty (every state is resourceful) and $\mathcal{O}_{\text{QM}}$ is the set of all unitary operations.  

In contrast to the resource theories of entanglement and athermality, the asymptotic transformation rates for $\mathcal{R}_{\text{QM}}$ are not known. Therefore, we cannot directly apply Theorem \ref{thm:main} in this case. The trick is to use a catalyst composed of two parts, i.e., \mbox{$C = C_1 C_2$}, prepared in a state $\omega_{C} = \omega_{C_1} \ot \omega_{C_2}$, where $\omega_{C_1}$ is arbitrary and $\omega_{C_2}$ is a maximally mixed state of dimension $d_{C_2} = d_S d_{C_1}$ \cite{boes_catalytic_2018}. 
The unitary $U$ acting on $SC$ is then chosen to be a controlled unitary
\begin{equation} \label{eq:unitary}
    U = \sum_{i=1}^{d_{C_2}} U_i \ot \dyad{i}_{C_2}.
\end{equation}
\new{The above transformation, by construction, induces on the system $SC_1$ the mixed-unitary transformation}  
\begin{align}
\mathcal{E}_{\mathrm{NO}} (\rho_{S} \ot \omega_{C_1}) := \frac{1}{d_{C_2}} \sum_{i} U_i (\rho_S \ot \omega_{C_1}) U_i^{\dagger}.     
\end{align}
Importantly, the class of channels $\mathcal{E}_{\mathrm{NO}}$ obtained by varying $\{U_i\}$ is equivalent (in terms of state transformations) to Gibbs-preserving operations with infinite temperature, also known as noisy operations \cite{gour_resource_2015}. 

\new{In what follows we will be interested in noisy operations $\mathrm{NO}$ which preserve the marginal state of the catalyst $C_1$, namely those which satisfy
\begin{align}
\label{eq:catalytic_cons_no}
\Tr_{S}\mathcal{E}_{\mathrm{NO}}[\rho_S \ot \omega_{C_1}] = \omega_{C_1}.
\end{align}}
\new{Next, observe that the unitary also leaves the marginal state of the catalyst $C_2$ invariant, i.e. it satisfies}
\begin{align}
    \Tr_{SC_1} \left[U(\rho_S \ot \omega_C)U^{\dagger}\right] = \omega_{C_2}.
\end{align}
\new{Therefore the unitary $U$ satisfies the catalytic condition as expected.}

The above means that our results from Sec.~\ref{sec:thermo}, and especially Corollary~\ref{cor3} can be used to determine sufficient conditions for state transformations with a finite-dimensional catalyst $C_2$ 
The total size of the required catalyst is in this case $d_C = d_{C_1}(1+d_S)$. As a consequence, we arrive at the following corollary.

\begin{corollary}\label{cor4}
    There exists a \new{marginal-correlated catalytic} unitary transformation between density operators $\rho$ and $\sigma$ with an error $\epsilon$ that can be implemented with a $d_C = d_{C_1}(1+d_S)$-dimensional catalyst if
    \begin{align}
        \label{eq:thermo_bound}
        H(\rho)- H(\sigma) > \sqrt{\frac{V(\rho) \log d_{S}}{\log  d_{C_1}}}f_{\nu}(\epsilon) + o\left(\frac{1}{\sqrt{\log d_{C_1}}}\right).
    \end{align}
\end{corollary}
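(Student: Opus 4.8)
The plan is to reduce Corollary~\ref{cor4} to the thermodynamic result of Corollary~\ref{cor3} using the dilation of noisy operations into catalytic unitaries set up in the main text. The key structural fact is that the controlled unitary $U=\sum_i U_i\ot\dyad{i}_{C_2}$ acting on $SC_1C_2$, with $\omega_{C_2}$ maximally mixed of dimension $d_{C_2}=d_Sd_{C_1}$, induces on $SC_1$ exactly the mixed-unitary channel $\mathcal{E}_{\mathrm{NO}}$, and that the family of such channels coincides with noisy operations, i.e.\ Gibbs-preserving operations at infinite temperature with $\gamma=\iden/d_S$. Thus realizing $\rho_S\to\sigma_S$ by a marginal-correlated catalytic unitary that returns both $\omega_{C_1}$ and $\omega_{C_2}$ locally undisturbed is equivalent to realizing $\rho_S\to\sigma_S$ by a correlated-catalytic noisy operation on $S$ with $C_1$ as catalyst, the ancilla $C_2$ serving only as the unitary dilation.

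Granting this equivalence, I would apply Corollary~\ref{cor3} to the infinite-temperature instance, identifying the thermal catalyst there with $C_1$ of dimension $d_{C_1}$. It then remains to evaluate the relevant quantifiers at $\gamma=\iden/d_S$. Using $\log\gamma=-(\log d_S)\,\iden$ one gets $D(\rho\|\gamma)=\log d_S-H(\rho)$, so that the free-energy difference $D(\rho\|\gamma)-D(\sigma\|\gamma)$ collapses to the entropy difference (with the resource-decreasing orientation corresponding to entropy non-decrease), while $V(\rho\|\gamma)=V(\rho)$ is the von Neumann entropy variance and $\nu$ reduces to its purely entropic form. Substituting these identities into the inequality of Corollary~\ref{cor3} and replacing $d_C$ by $d_{C_1}$ turns the thermodynamic bound into the claimed entropic bound with $\log d_{C_1}$ in the denominator. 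The total catalyst dimension $d_C=d_{C_1}(1+d_S)$ is then read off from the two-part construction, namely $C_1$ together with the dilation ancilla $C_2$, as in the main text.

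The step I expect to be the main obstacle is the reduction itself rather than the substitution. Concretely, one must check that the correlated-catalytic noisy operation guaranteed by Corollary~\ref{cor3} can genuinely be written as a mixed-unitary channel on $SC_1$ that both returns the marginal $\omega_{C_1}$ undisturbed and admits a unitary dilation on $SC_1C_2$ returning $\omega_{C_2}$ undisturbed, so that the composite is a legitimate marginal-correlated catalytic unitary in the sense of Eq.~\eqref{eq:unitary_corr}. This rests on the unital structure of noisy operations and on the already quoted equivalence between $\mathcal{E}_{\mathrm{NO}}$ and noisy operations; a secondary point to confirm is that the incoherence restriction of Corollary~\ref{cor3} becomes immaterial here, since at infinite temperature the free operations include arbitrary basis rotations and the quantifiers $H(\rho)$ and $V(\rho)$ depend only on the spectrum. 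Finally one verifies that the second-order expansion of the transformation rate remains valid as $\gamma\to\iden/d_S$, i.e.\ that $V(\rho)$ and $\nu$ stay finite in this limit. Once these points are secured, the entropic bound follows by direct substitution into Corollary~\ref{cor3}.
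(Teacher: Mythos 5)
Your proposal matches the paper's own route: the paper likewise reduces the marginal-correlated catalytic unitary to a correlated-catalytic noisy operation on $SC_1$ via the controlled-unitary dilation with a maximally mixed $C_2$ of dimension $d_S d_{C_1}$, and then obtains the bound by the same substitution of the infinite-temperature quantities $D(\rho\|\pi)$ and $V(\rho\|\pi)$ into Corollary~\ref{cor3} (Appendix~\ref{app:corr} explicitly states that Corollary~\ref{cor4} follows the same steps as Corollary~\ref{cor3}). The points you flag --- the catalytic condition on $C_2$, the immateriality of the incoherence restriction at infinite temperature, and the sign-flip issue, which is resolved by the paper's convention $H(\rho):=D(\rho\|\pi)$ --- are exactly the points the paper handles (or glosses over via citation) in Sec.~\ref{sec:mechanics}.
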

In the above we simplified notation by defining  $H(\rho) := D(\rho\|\pi)$ and $V(\rho) := V(\rho\|\pi)$ with $\pi = \mathbb{1}/d_S$ being the maximally-mixed state.

\begin{figure}[t]
    \centering
    \includegraphics[width=0.93\columnwidth]{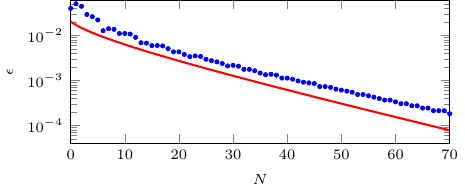}
    \caption{\label{fig:error} \textbf{Error dependence on the size of the catalyst.} Transformation error~$\epsilon$ in the correlated-catalytic transformations, with catalysts of size~\mbox{$d_C=3^N$}, between energy-incoherent states $\rho_{\mathrm{ini}}=\mathrm{diag}[(84,10,6)/100]$ and~$\rho_{\mathrm{fin}}=\mathrm{diag}[(79,19,2)/100]$, for free operations given by thermal operations in the limit of infinite temperature. The blue circles correspond to the actual transformation error, whereas the red solid line is the analytic prediction based on the second-order asymptotics as captured by Theorem~\ref{thm:main}.
    }   
\end{figure}


\section{Summary and Outlook}
\label{sec:outlook}
In this work we raised the question of characterizing state transformations in resource theories with catalysts of finite size. We established that for a wide range of resource theories (those that are invariant under permutations), the multi-copy transformation rate leads to a condition guaranteeing the existence of a correlated catalytic transformation with a finite catalyst (Theorem~\ref{thm:main}). We further explored applications of this finding in the context of the resource theory of local operations and classical communication (LOCC) with \emph{mixed} states. Moreover, for those resource theories where the asymptotic expansion of the transformation rate from Eq.~\eqref{eq:r_exp} is known (like athermality and pure-state entanglement), we established sufficient conditions for the existence of correlated-catalytic state transformations with finite-size catalysts. These conditions link the catalyst's size to specific information-theoretic quantities characterizing the transformation itself (Corollaries 1 and 2). Additionally, we observed an analog of the resource resonance phenomenon in the context of catalytic transformations. More precisely, we found that by appropriately choosing the catalyst state, it becomes possible to substantially decrease the catalyst's necessary size for specific state transformations.

An interesting future direction is to verify if our criteria for catalytic transformations, currently expressed as sufficient conditions, are also necessary. This is plausible, because these same information-theoretic quantities dictate optimal rates for asymptotic transformations in many resource theories. Due to the close connection between catalytic and asymptotic transformations, we expect similar optimality principles to apply to the size of the catalyst. Finally, we believe that the powerful tools used for analyzing asymptotic transformation rates, like the theory of small and large deviations, can be highly valuable for understanding catalytic transformations in resource theories. We expect that multi-copy phenomena can have their counterparts in catalyst-assisted settings. Finding these analogues is an interesting avenue for future research.


\subsection*{Acknowledgements} 

KK would like to thank Marco Tomamichel and Chris Chubb for useful comments. \new{PLB would like to also thank Henrik Wilming for useful comments on the first draft of this paper, and especially for spotting a gap in the proof of Corollary $6$.}   KK would also like to  thank the Institute of Theoretical Physics, Jagiellonian University, where part of this project was realised. PLB acknowledges the Swiss National Science Foundation for financial support through NCCR SwissMAP, \new{as well as the Polish National Science Centre through Sonata 2023/51/D/ST2/02309.}

\appendix
\onecolumngrid


\section{Proof of Lemma~\ref{lem:catalyst_asymptotic}} 
\label{app1}

Let $\mathcal{R} = (\mathcal{S}, \mathcal{O})$ be any permutationally-free resource theory. By the definition of the transformation rate $R_{\epsilon}^{n}$, the condition $R_{\epsilon}^{n}(\rho, \sigma) > 1$ implies that there exists a free operation $\widetilde{{\Lambda}} \in \mathcal{O}$ such that $\widetilde{{\Lambda}}[\rho^{\ot n}] = \chi^m$ with $\norm{\sigma^{\ot m} - \chi^m}_1 \leq \epsilon$ and $m > n$. Let $\Lambda \in \mathcal{O}$ denote another free operation acting as $\Lambda[X] := \Tr_{n+1:m}\widetilde{\Lambda}[X]$, where $\Tr_{a:b}$ denotes partial trace over the copies labelled by $i$ such that $a \leq i \leq b$. As a consequence, the resulting map $\Lambda$ acts as
\begin{align}
    \Lambda[\rho^{\ot n}] = \chi^n,
\end{align}
where $\norm{\sigma^{\ot n} - \chi^n}_1 \leq \epsilon$. To prove the lemma we choose the catalyst $C$ to be composed of two subsystems, $C = C_1 C_2$. Consider a family of density operators $\{\omega_C^n\}$ on $C$ defined as
\begin{align}
    \label{eq:def_duan_state}
    \omega_C^n := \frac{1}{n}\sum_{i=1}^{n} (\rho^{\ot (i-1)} \ot \chi^{n-i})_{C_1} \ot \dyad{i}_{C_2},
\end{align}
where we labelled $\chi^{n-i} := \Tr_{1:i}[\chi^n]$ with $\Tr_{1:i}[\cdot]$. The dimension of $C$ is given by $d_C := \dim(C) = \dim(C_1) \dim(C_2)$, where $\dim(C_1) = d^{n-1}$ and $\dim(C_2) = n-1$. Consequently we have $\log d_C = (n-1)\log d + \log (n-1)$. 

The initial state of the composite system is given by $\rho_S \ot \omega_C^n $. Let us now specify the free operation on the joint system $SC$. We start by applying the conditional operation
\begin{align}
    \mathcal{E} = \sum_{i=1}^n \mathcal{E}^{(i)}_{SC_1} \ot \dyad{i}_{C_2},
\end{align}
where $\mathcal{E}^{(i)}_{SC_1} = \text{id}^{\ot n}$ for $1 \leq i < n$ and $\mathcal{E}^{(n)}_{SC_1} = \Lambda$. The resulting state on $SC$ reads
\begin{align}
    \sigma_{SC}^{(1)} &= \mathcal{E} \left[ \rho_S \ot \omega_C^n \right] =\frac{1}{n} \left( \rho_S \ot \chi^{ n-1}_{C_1} \ot \dyad{1}_{C_2} + \ldots + \Lambda [\rho^{\ot n}_{SC_1} ] \ot \dyad{n}_{C_2} \right).
\end{align}
Next we apply a recovery operation \new{$\mathcal{F}(\cdot) := F (\cdot) F^{\dagger}$} which is a cyclic permutation on $SC_1$ with a relabelling of  $C_2$, \new{where the unitary $F$ is given by:}
\begin{align}
    \new{F}\left[ \ket{i_1}_{S} \ot \left( \ket{i_2} \ot \ldots \ot \ket{i_n}\right)_{C_1} \ot \ket{i}_{C_2}\right] = \ket{i_n}_{S} \ot \left( \ket{i_1} \ot \ldots \ot \ket{i_{n-1}}\right)_{C_1} \ot \ket{i+1}_{C_2},
\end{align}
with $\ket{n+1}_{C_2} \equiv \ket{1}_{C_2}$. Applying $\mathcal{F}$ to the intermediate state $\sigma_{SC}^{(1)}$ leads to a state $\sigma_{SC}^{(2)} $ of the form
\begin{align}
    \label{eq:sigma_SC}
    \sigma_{SC}^{(2)} &:=   \mathcal{F} [\sigma_{SC}^{(1)}] = \frac{1}{n} \left( \chi^n_{SC_1} \ot \dyad{1}_{C_2} + \sum_{i=2}^{n} \mathcal{P}_{i}(\rho^{\ot i-1} \ot \chi^{n-i})_{SC_1} \ot \dyad{i}_{C_2}\right),
\end{align}
where $\mathcal{P}_k$ is a quantum channel that implements the following permutation of subsystems
\begin{align}
    \mathcal{P}_k\left( \dyad{i_1}_{S} \ot \left( \dyad{i_2} \ot \ldots \otimes \dyad{i_k} \ot \ldots \ot \dyad{i_n}\right)_{C_1} \right) = \dyad{i_k}_{S} \ot \left( \dyad{i_2} \ot \ldots \otimes \dyad{i_1} \ot \ldots \ot \dyad{i_n}\right)_{C_1}.
\end{align}
Notice further that
\begin{align}
    \label{eq:observation1}
    \Tr_{C} \sigma_{SC}^{(2)} =  \frac{1}{n} \sum_{i=1}^n \Tr_{/i} (\chi_{}^{n}).
\end{align}
Using the triangle inequality and the fact that the trace distance is contractive under CPTP maps, it follows that 
\begin{align}
    \norm{\Tr_C \sigma_{SC}^{(2)} - \sigma}_1 \leq \epsilon.
\end{align}
Denoting with $\Tr_{/i}(\cdot)$ the partial trace over systems $\{1, \ldots, i-1, i+1, \ldots, n\}$ and using the fact that the composition $\Tr_S \circ \mathcal{P}_i (\cdot) = \Tr_{i} (\cdot)$ and $\Tr_C \circ \mathcal{P}_{i} (\cdot) = \Tr_{/i} (\cdot)$ we find that
\begin{align}
    \Tr_S[\sigma_{SC}^{(3)}] = \omega_C.
\end{align}
This concludes the proof of the lemma.


\section{Lower bound on the transformation rate in the resource theory of mixed-state LOCC} 
\label{app2}

In this Appendix we prove Corollary \ref{corr:ent} from the main text. Our goal is to find an achievable lower bound on the (asympotic) transformation rate $R^{\infty}$ in the resource theory of LOCC entanglement. For that consider two mixed states $\rho$ and $\sigma$ and the following transformation
\begin{align}
    \rho^{\ot n} \xrightarrow{\text{LOCC}} \phi_+^{\ot s} \xrightarrow{\text{LOCC}} \sigma^{\ot m}.
\end{align}
The first LOCC process is usually referred to as \emph{entanglement distillation} while the second one is known as \emph{entanglement formation}. The ratio $E_D(\rho) = s/n$, in the limit of large $n$ and under an arbitrarily small transformation error, is known as \emph{distillable entanglement}. On the other hand, under similar conditions, $E_C(\sigma) = s/m$ is known as \emph{entanglement cost}. By combining these two processes we can obtain an achievable lower bound on the asymptotic transformation rate. The transformation rate of such a transformation is given by
\begin{align}
    R^{\infty}(\rho,\sigma) = \lim_{n\rightarrow \infty} \frac{m}{n} = \frac{E_D(\rho)}{E_C(\sigma)}.
\end{align}
Now we can observe that $E_D(\rho)$ can be lower bounded using the hashing bound \cite{Bennett_1996}, i.e.
\begin{align}
    E_D(\rho) \geq \max \left\{S(\rho_B) - S(\rho), S(\rho_A) - S(\rho) \right\}.
\end{align}
For the entanglement cost $E_C(\sigma)$ we will use an upper bound that uses \emph{entanglement of formation} $\leq E_F(\sigma)$ \cite{hayden2001asymptotic}, i.e
\begin{align}
    E_C(\sigma) \leq E_F(\sigma) := \min_{\{p_i, \dyad{\phi_i}\}} \sum_{i} p_i S(\phi_A^{i}),
\end{align}
where $\phi_A^i := \Tr_B \dyad{\phi_i}$ and the optimization is over all realizations of the density matrix $\sigma$, i.e. all ensembles of pure states $\{p_i, \ket{\phi_i}\}_i$ such that $\sum_i p_i \dyad{\phi_i} = \sigma$. 

Consequently, we obtain the following lower bound on the asymptotic rate
\begin{align}
    R^{\infty}(\rho, \sigma) \geq \frac{\max \left\{S(\rho_B), S(\rho_A)\right\}  - S(\rho)}{\min_{\{p_i, \dyad{\phi_i}\}} \sum_{i} p_i S(\phi_A^{i})}.
\end{align}



\section{Proof of Corollaries}
\label{app:corr}
In this Appendix we focus on proving Corollary~\ref{cor3}; the proofs of related Corollaries~\ref{cor2} and~\ref{cor4} follow the same steps. Let us observe that for $R(\rho,\sigma) > 1$ the following is true due to  Theorem \ref{thm:main}:
\begin{align}
    \label{cat1}
    \log d_{C} = \log n_{\epsilon}(\rho,\sigma) + (n_{\epsilon}(\rho,\sigma) - 1) \log d_S > n_{\epsilon}(\rho,\sigma) \log d_S.
\end{align}
Furthermore, given the second order asymptotic expansion of $R^n_\epsilon$ as in Eq.~\eqref{eq:second_order}, the quantity $n_\epsilon(\rho,\sigma)$ is (by definition) given by the smallest $n$ satisfying
\begin{equation}
    \sqrt{n} > \frac{R'_\epsilon(\rho,\sigma)}{R(\rho,\sigma)-1+o(1/\sqrt{n})} = \frac{R'_\epsilon(\rho,\sigma)}{R(\rho,\sigma)-1}  + o\left(\frac{1}{\sqrt{n}}\right),
\end{equation}
Consequently, using the above in Eq. \eqref{cat1} we can write 
\begin{align}
    \sqrt{\log d_{C}} &> \sqrt{\log{d_S}} \left[\frac{R'_\epsilon(\rho,\sigma)}{R(\rho,\sigma)-1}+ o\left(\frac{1}{\sqrt{n}}\right)\right].
\end{align}
Using the explicit form for the rates $R(\rho,\sigma)$ and $R'^{\epsilon}(\rho,\sigma)$ from Eq. \eqref{eq:to_expansion} we obtain
\begin{align}
     \sqrt{\log d_{C}} &> \sqrt{\log{d_S}} \left[\frac{R'_\epsilon(\rho,\sigma)}{R(\rho,\sigma)-1}\right] + o\left(\frac{1}{\sqrt{n}}\right) \\
     &= \sqrt{\log{d_S}} \left[\frac{\sqrt{V(\rho\|\gamma)}}{D(\rho\|\gamma) - D(\sigma\|\gamma)} f_{\nu}(\epsilon)\right] + o\left(\frac{1}{\sqrt{n}}\right).
\end{align}
Rearranging the above inequality and using the fact that $\log d_{C} > n_{\epsilon}(\rho, \sigma)$ yields
\begin{align}
    D(\rho\|\gamma) - D(\sigma\|\gamma) > \sqrt{\frac{V(\rho\|\gamma) \log d_S}{\log d_C}} f_{\nu}(\epsilon) + {o}\left(\frac{1}{\sqrt{\log d_{C}}}\right).
\end{align}
    
\bibliographystyle{quantum}
\bibliography{citations}

\end{document}